\newcommand{\myhash}{%
  {\settoheight{\dimen0}{C}\kern-.05em\, \resizebox{!}{\dimen0}{\raisebox{\depth}{\#}}}}
\newcommand{\herm}{{\sf T}}
\newtheorem{theorem}{\bf Theorem}
\newtheorem{corollary}{\bf Corollary}
\theoremstyle{remark}
\newtheorem{remark}{\bf Remark}
\newtheorem{example}{\bf Example}
\theoremstyle{definition}
\newtheorem{definition}{\bf Definition}
\long\def\comment#1{}
\newcommand{\ben}{\begin{enumerate}}
\newcommand{\een}{\end{enumerate}}
\newcommand{\beq}{\begin{equation}}
\newcommand{\eeq}{\end{equation}}
\newcommand{\bi}{\begin{itemize}}
\newcommand{\ei}{\end{itemize}}
\newcommand{\PP}{\mathbb{P}}
\newcommand{\RR}{\mathbb{R}}
\newcommand{\EE}{\mathbb{E}}
\newcommand{\av}{{\bf a}}
\newcommand{\bv}{{\bf b}}
\newcommand{\xv}{{\bf x}}
\newcommand{\Am}{{\bf A}}
\newcommand{\Bm}{{\bf B}}
\newcommand{\Id}{{\bf I}}
\renewcommand{\vec}{{\rm vec}}
\title{On the Restricted Isometry Property of Centered Self Khatri-Rao Products}
\author{Alexander Fengler and Peter Jung\\
  Communications and Information Theory Group,
  Technische Universit\"{a}t Berlin\\[.2em]
  {\em \{fengler,peter.jung\}@tu-berlin.de}}
\begin{document}

\maketitle

\begin{abstract}
  In this work we establish the Restricted Isometry Property (RIP) of
  the centered column-wise self Khatri-Rao (KR) products of
  $n\times N$ matrix with iid columns drawn either uniformly from a
  sphere or with iid sub-Gaussian entries. The self KR product is an
  $n^2\times N$-matrix which contains as columns the vectorized (self)
  outer products of the columns of the original $n\times N$-matrix.
  Based on a result of Adamczak et al.  we show that such a centered self
  KR product with independent heavy tailed columns has small RIP
  constants of order $s$ with probability at least $1-C\exp(-cn)$
  provided that $s\lesssim n^2/\log^2(eN/n^2)$.  Our result is
  applicable in various works on covariance matching like in activity
  detection and MIMO gain-estimation.
\end{abstract}

\newcommand{\dimParam}{N}
\newcommand{\dimPilots}{n}

\section{Introduction}
In estimation and recovery problems related to empirical second
moments, e.g. covariance matching, one often observes a matrix, which is
a noisy linear combination of outer products
$\{\av_i\av_i^\herm\}_{i=1}^N$ of $N$ random but known  vectors
$\av_i\in\RR^n$. The goal is then to estimate the unknown coefficients from this
observed $n\times n$ matrix. In the prototypical example this is an
empirical covariance matrix. This problem appears in matrix and tensor
recovery problems and many recent applications, see
e.g. \cite{Romera:2016, Dasarathy:2015, Ma:2010,Duan:2016,Haghighatshoar:MVV:arxiv:18} and massive MIMO
\cite{Fengler:MassiveAccess:arxiv:2019}. In the case of sparse linear
combinations this yields a compressed sensing
\cite{Candes2005,Donoho2006a} problem with a random but structured
$n^2\times N$ measurement matrix.

There are several known properties
which ensure robust and stable $\ell_p$-recovery guarantees of the
vector of unknown but sparse (or compressible) coefficients. Among
them is the restricted isometry property (RIP) of order $s$ which
ensures that a $m\times N$ measurement matrix maps $s$-sparse vectors
almost-isometrically, i.e., there exists $\delta_s\in[0,1)$ such that
\begin{equation}
  (1-\delta_s)\|\xv\|^2_2\leq\|\Am \xv\|_2^2\leq(1+\delta_s)\|\xv\|_2^2
\end{equation}
holds for all $s$--sparse vectors $\xv\in\RR^N$.  Several upper bounds
on $\delta_{2s}$ have been established to ensure stable and robust
recovery for certain algorithms, see e.g. \cite{candes:rip2008,
  Foucart2013}.  For example, it is known that that $\ell_1$-based
convex recovery algorithms succeed if $\delta_{2s}<\frac{1}{\sqrt{2}}$
\cite{Cai2014}.  For a random matrix with iid. sub-Gaussian components
it is known that this property holds with overwhelming probability for
$m\gtrsim s\cdot\log(N/s)$, see for example \cite{Foucart2013} and
also \cite{dirksen:ripgap} for a further discussions about its
relation to the e.g. the nullspace property.  When imposing additional
structure on a random measurement matrix often more measurements are
required to ensure robust recovery guarantees.  However, for many
random ensembles it has been shown that it is still possible to achieve, up
to log-factors, a linear relation between sparsity $s$ and number of
measurements $m$, meaning that $m\gtrsim s\cdot\text{polylog}(N)$.

In this work we show that the structure imposed by the problem above
indeed also allows robust and stable recovery in the regime
$s\lesssim n^2/\log^2(eN/n^2)$ since $m=n^2$.  This addresses a
conjecture raised in \cite{Khanna:correction} for non-centered KR
product.  Some of the essential proof steps have been sketched already
in \cite{Hag:isit18}.

More precisely, let $\Am = (\av_i)_{i=1,...,N}$ be a random matrix
with independent columns $\av_i \in \RR^{n}$, $\av_i \sim P_\av$.  The
(column-wise) self Khatri-Rao product of the matrix $\Am$ is defined
as
\beq
(\Am \odot \Am)_i := \vec(\av_i \otimes \av_i)=\vec(\av_i\av_i^\herm)
\eeq
where the
matrix $(\av \otimes\bv)_{ij}=(\av\bv^\herm)_{ij}:=a_ib_j$ is the outer product\footnote{
  $\vec: \RR^{n_1\times n_2}\to\RR^{n_1n_2}, \vec(\Bm) = \bv$ with
  $\bv_{i+jn_2} = \Bm_{ij}$ is the vectorization operation
  identifying a matrix with a vector.}  of the vector $\av$ and
$\bv$.

We assume the columns $\av_i$ to be normalized in expectation such
that $\EE\{\|\av_i\|^2\} = n$ and are drawn from an isotropic
distribution, i.e.  \beq \EE\{\av_i \otimes \av_i\} = \Id_n.  \eeq
First results\footnote{Note that the results have been corrected in
  v3 of the preprint} on the RIP property for self KR products have
been established in \cite{Khanna:KRRIP:arxivv3,Khanna:KRRIP}, \cite{Khanna:correction}. In this work it has
been shown that for $n\gtrsim s\log(N)$ (meaning that
$m=n^2\gtrsim s^2\log^2(N)$) the $n^2\times N$-dimensional KR product
$\Am\odot\Am$ of a centered iid. sub-Gaussian $n\times N$ matrix $\Am$
has RIP with high probability, see \cite[Theorem
3]{Khanna:KRRIP:arxivv3}. Thus, the number of measurements $m$ scales
{\em quadratically} in the sparsity $s$. However, we will show below
that the scaling is indeed {\em linear} when centering the
KR product.

We will use the work of \cite{Ada2011} to prove a bound on the RIP
constant of the centered and normalized KR product $\mathcal{A} \in \RR^{n^2\times N}$:
\beq
\label{eq:centered_kr}
\mathcal{A}_i := \kappa(n)\vec (\av_i \otimes \av_i - \Id_n)
\eeq
It is easy to see that $\EE\{\mathcal{A}_i\} = 0$. 
$\kappa(n)$ is a normalization factor to ensure that the columns of $\mathcal{A}$ are still
normalized after centering:
$\EE\{\|\mathcal{A}_i\|^2\} = n^2$.
In general
\beq
\kappa(n) = \frac{n^2}{\EE\{\|\vec (\av_i \otimes \av_i - \Id_n)\|^2_2\}}.
\label{eq:kappa}
\eeq
Note that for a vector $\av=(a_1,\dots,a_N)$ one has
\begin{equation}
  \begin{split}
    &\|\vec (\av \otimes \av - \Id_n)\|^2_2
    = 
    \sum_{i,j=1}^n (a_ia_j - \delta_{ij})^2 \\
    &= \sum_{i\neq j}^n a_i^2a_j^2 + \sum_{i=1}^n (a_i^2 - 1)^2 
    = \left(\sum_{i=1}^n a_i^2\right)^2 - 2\sum_{i=1}^n a_i^2 + n
    \label{eq:kappa1}
  \end{split}
\end{equation}
\begin{example}[iid with normalized 2.~moment]
  \label{ex:sub_gauss}
  Let $\av \in \RR^n$ be a random vector with components
  $\{a_i\}_{i=1}^n$ being independent copies of $a\sim P_a$ with
  $\EE\{a^2\} = 1$.  Then  
  \beq
  \EE\{\|\vec (\av \otimes \av - \Id_n)\|^2_2\} 
  = n(n-2+\EE\{a^4\})
  \eeq
\end{example}
\begin{example}[Constant Amplitude]
  Let $\av \in \RR^n$ be a random vector such that it's components
  $\{a_i\}_{i=1}^n$ are independent copies of a Rademacher random
  variable $a$, i.e., with fixed amplitude $a^2 = 1$ and uniformly
  distributed sign.
  From the previous
  example it follows that \beq \kappa(n)= \frac{n^2}{n(n-1)}=
  \frac{n}{n-1}.  \eeq
\end{example}
\begin{example}[Spherical Distribution]
    \label{ex:spherical}
    Let $\av \in \RR^n$ be drawn uniformly from the sphere with radius $\sqrt{n}$.
    Then $\|\av\|_2^2=\sum_i a_i^2 = n$ and it can easily be checked
    that \eqref{eq:kappa1} gives:
    \beq
        \kappa(n)= \frac{n}{n-1}.
    \eeq
\end{example}

\section{RIP for Centered KR Products}
The $\psi_\alpha$-norm for $\alpha\geq 1$ of a real-valued random
variable $Y$ can\footnote{this definitions are not unique in the
  literature such that these norms may differ in constants}
formally be defined as:
\begin{equation}
  \begin{split}
    \|Y\|_{\psi_\alpha}
    &=\inf\{ K>0: \mathbb{E}\,\exp(|Y|^\alpha/K^\alpha)\leq 2\}
  \end{split}
  \label{eq:def:psinorm}
\end{equation}
Note that $\|Y\|_{\psi_\alpha}\leq \|Y\|_{\psi_\beta}$ for
$\alpha\geq\beta\geq 1$. If $\|Y\|_{\psi_\alpha}<\infty$ is satisfied,
the random variable $Y$ is called {\em sub-Gaussian} for $\alpha=2$ and {\em
  sub-exponential} for $\alpha=1$.
The definitions above extend in
a canonical way to random vectors.  The $\psi_\alpha$-norm of a
random vector $X$ is defined as the best uniform bound on the
$\psi_\alpha$-norm of its marginals:
\beq \|X\|_{\psi_\alpha} :=\sup_{\|x\|_2 = 1} \|\langle
X,x\rangle\|_{\psi_\alpha}.  \eeq
$\psi_\alpha$ random variables and
vectors for $\alpha<2$ are often called heavy tailed.
Note that this terminology is also important if the $\psi_2$-norm of a random
vectors grows with its dimension.

\subsection{RIP for Independent Heavy-tailed Columns}
As introduced above, the KR product of a random matrix with
independent sub-Gaussian isotropic columns is itself a matrix with
heavy tailed (sub-exponential) independent columns having a special
structure.  The RIP properties for the column-independent model with
normalized sub-Gaussian isotropic columns have been established in
\cite{vershynin_2018}.
In a series of works
\cite{Ada2011,Guedon2014:heavy:columns,Guedon2017} the heavy tailed
column independent model has been further investigated and concrete
results can be found for various ensembles.  However, the previously
investigated ensembles not explicitly discuss the structure
imposed by KR products.
Thus, we make use of the following generic RIP result from
\cite[Theorem~3.3]{Ada2011} for matrices with iid sub-exponential
columns:
\begin{theorem}[Theorem~3.3 in \cite{Ada2011}]
\label{thm:adam}
Let $m\geq 1$ and $s,N$ be integers such that $1\leq s\leq \min (N,m)$.
Let $X_1,...,X_N \in \RR^m$ be independent $\psi_1$ random vectors
normalized such that $\EE\{\|X_i\|^2\} = m$ and
let $\psi = \max_{i\leq N}\|X_i\|_{\psi_1}$. Let $\theta^\prime \in(0,1)$,
$K,K^\prime \geq 1$ and set $\xi = \psi K + K^\prime$. Then for the
matrix $A$ with columns $X_i$, $A := (X_1|...|X_N)$
\beq
\delta_s\left(\frac{A}{\sqrt{m}}\right) \leq C \xi^2\sqrt{\frac{s}{m}}
\log\left(\frac{eN}{s\sqrt{\frac{s}{m}}}\right)
+ \theta^\prime
\eeq
holds with probability larger then
\begin{align}
    1 &- \exp\left(-cK\sqrt{s}\log\left(\frac{eN}{s\sqrt{\frac{s}{m}}}\right)\right) \\
&- \PP\left(\max_{i\leq N}\|X_i\|_2\geq K^\prime\sqrt{m}\right) 
- \PP\left(\max_{i\leq N}\left|\frac{\|X_i\|_2^2}{m} - 1\right|\geq \theta^\prime\right),
\label{eq:adam_error}
\end{align}
where $C,c > 0$ are universal constants.
\end{theorem}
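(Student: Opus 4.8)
\emph{Proof plan.} The object to control is
\[
\delta_s\!\left(\tfrac{A}{\sqrt m}\right)=\sup_{|T|\le s}\ \sup_{\yv\in\RR^{T},\,\|\yv\|_2=1}\ \Big|\tfrac1m\|A_T\yv\|_2^2-1\Big|,
\]
where $A_T$ is the $m\times|T|$ submatrix of $A=(X_1|\dots|X_N)$ with columns $\{X_i\}_{i\in T}$. I would expand $\|A_T\yv\|_2^2=\sum_{i,j\in T}y_iy_j\langle X_i,X_j\rangle$ and, using $\EE\langle X_i,X_i\rangle=m$ together with $\EE\langle X_i,X_j\rangle=0$ for $i\ne j$ (the columns being centered, which is the case of interest), split the centered functional into a \emph{diagonal} part $D(\yv):=\tfrac1m\sum_{i\in T}y_i^2(\|X_i\|_2^2-m)$ and an \emph{off-diagonal} part $O(\yv):=\tfrac1m\sum_{i\ne j\in T}y_iy_j\langle X_i,X_j\rangle$. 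On the event $\{\max_{i\le N}|\|X_i\|_2^2/m-1|<\theta'\}$ one has $\sup_{\|\yv\|_2=1}|D(\yv)|\le\theta'$ for free, which accounts for the additive $\theta'$ in the bound and for the last probability term in \eqref{eq:adam_error}. Everything else is devoted to showing $E_s:=\sup_{|T|\le s,\|\yv\|_2=1}|O(\yv)|\le C\xi^2\sqrt{s/m}\,\log\!\big(eN/(s\sqrt{s/m})\big)$ off a small-probability event.

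\emph{Reductions.} Next I would pass to the event $\mathcal{E}:=\{\max_{i\le N}\|X_i\|_2\le K'\sqrt m\}$ --- whose complement is the term $\PP(\max_i\|X_i\|_2\ge K'\sqrt m)$ --- so that the columns are bounded; combined with the control of the $\psi_1$ tails beyond that level (parametrized by $K$) this is what merges $\psi$, $K$ and $K'$ into the single quantity $\xi=\psi K+K'$. On $\mathcal{E}$, standard symmetrization and decoupling reduce $E_s$, up to absolute constants and in expectation, to the decoupled Rademacher chaos $\EE\sup_{|T|\le s,\|\yv\|_2=1}\tfrac1m\big|\sum_{i\ne j\in T}\varepsilon_iy_iy_j\langle X_i,X_j'\rangle\big|$, with $(X_i')$ an independent copy and $(\varepsilon_i)$ independent signs. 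Writing this chaos as $\langle\sum_i\varepsilon_iy_iX_i,\ \sum_jy_jX_j'\rangle$ minus a diagonal remainder and applying Cauchy--Schwarz bounds it by
\[
\Big(\EE\sup_{|T|\le s,\|\yv\|_2=1}\tfrac1{\sqrt m}\big\|\textstyle\sum_{i\in T}\varepsilon_iy_iX_i\big\|_2\Big)\cdot\sqrt{1+\delta_s(B')}+(\text{diag.}),
\]
where $B'=A'/\sqrt m$ denotes the copy. This is the self-referential feature of the estimate: $\delta_s$ gets controlled by (a small quantity) times $\sqrt{1+\delta_s}$, which I would discharge at the end.

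\emph{Core chaining estimate.} The crux is the linear term $\Gamma_s:=\EE\sup_{|T|\le s,\|\yv\|_2=1}\tfrac1{\sqrt m}\|\sum_{i\in T}\varepsilon_iy_iX_i\|_2$, the expected supremum of a conditionally mixed sub-Gaussian/sub-exponential process indexed by the nonconvex sparse sphere $\mathcal{S}_s:=\{\yv\in\RR^N:\|\yv\|_2=1,\ \|\yv\|_0\le s\}$. Using the $\psi_1$ bound $\psi$ on the columns and the truncation at $K'\sqrt m$, the increments of this process carry a mixed $\ell_2/\ell_1$ tail, and a generic-chaining (Dudley-type) argument gives $\Gamma_s\le C(\psi K+K')\sqrt{s/m}\,\log(eN/(s\sqrt{s/m}))$: the metric entropy satisfies $\log\mathcal{N}(\mathcal{S}_s,\rho)\lesssim s\log(eN/(s\rho))$, balancing the scale at $\rho\asymp\sqrt{s/m}$ produces the logarithmic factor, and $K$ fixes the level at which the sub-Gaussian and sub-exponential halves of the chaining sum are balanced (whence the $K$ in the exponent of the probability bound). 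Feeding $\Gamma_s$ back into the reduction yields an inequality of the shape $\delta_s\lesssim\xi^2\sqrt{s/m}\log(eN/(s\sqrt{s/m}))\sqrt{1+\delta_s}+\theta'$; solving it for $\delta_s$ --- harmless, since in the only non-trivial regime the prefactor is $\le\tfrac12$ --- gives the claimed bound, linear in $\sqrt{s/m}$ with a single logarithmic factor.

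\emph{Concentration, and the main obstacle.} Finally I would upgrade the in-expectation estimate for $E_s$ to a high-probability one via Talagrand's concentration inequality for suprema of empirical processes: on $\mathcal{E}$ the map $(X_i)_i\mapsto E_s$ is a supremum of bounded summands (of order $(K')^2/m$), so it concentrates around its mean with mixed Gaussian/exponential tails; taking the deviation level $\asymp K\sqrt s\log(eN/(s\sqrt{s/m}))$ matches the prefactor and yields the term $\exp(-cK\sqrt s\log(eN/(s\sqrt{s/m})))$, and a union bound over the three events gives \eqref{eq:adam_error}. The main obstacle is the core chaining estimate: pinning down the joint dependence on the mixing scale $K$, the truncation scale $K'$, the metric entropy of the sparse sphere, and the self-referential $\sqrt{1+\delta_s}$ factor so that the final bound is linear in $\sqrt{s/m}$ with only \emph{one} logarithmic factor rather than two --- it is precisely this sharpness that later lets the application to centered self Khatri--Rao products achieve linear (rather than quadratic) sparsity scaling. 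The symmetrization, decoupling, truncation bookkeeping, and final algebra are routine by comparison.
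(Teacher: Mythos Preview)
The paper does not prove this statement: Theorem~\ref{thm:adam} is quoted from \cite{Ada2011} and used as a black box. The authors' own work is to verify its hypotheses for the centered self Khatri--Rao ensemble --- showing that the columns $\mathcal{A}_i$ are $\psi_1$ with dimension-free norm via Hanson--Wright, and bounding the two norm-deviation probabilities in \eqref{eq:adam_error} --- not to reprove the underlying RIP result. So there is no ``paper's own proof'' to compare your proposal against.

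As a sketch of the argument actually given in \cite{Ada2011}, your outline is in the right spirit but differs in technique. The diagonal/off-diagonal split, the conditioning on the events $\{\max_i|\|X_i\|_2^2/m-1|<\theta'\}$ and $\{\max_i\|X_i\|_2\le K'\sqrt m\}$, and a chaining bound over the sparse sphere producing a single logarithmic factor are indeed the ingredients there. However, Adamczak et al.\ do not pass through symmetrization and decoupling to a Rademacher chaos: they bound the \emph{linear} quantity $\sup_{|T|\le s,\|\yv\|_2=1}\|\sum_{i\in T}y_iX_i\|_2/\sqrt m$ directly by a $\psi_1$ chaining argument (their Theorem~3.2) and then relate the quadratic RIP functional back to it algebraically. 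Your route via decoupling is closer to the Rudelson--Vershynin or Krahmer--Mendelson--Rauhut style and would also succeed, with one wrinkle: the theorem as stated does not assume $\EE X_i=0$, so your off-diagonal process is not a priori mean-zero and the symmetrization step needs an extra word (in the paper's application the columns \emph{are} centered, so this is harmless downstream). Your identification of the single-log sharpness as the crux --- and as the reason the Khatri--Rao application achieves linear rather than quadratic sparsity scaling --- is exactly right.
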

We shall use this theorem for $X_i = \mathcal{A}_i$ and $m = n^2$.
The key to get a good bound from Theorem \ref{thm:adam} is to 
\begin{enumerate}
    \item Show the marginals of 
        the columns of $\mathcal{A}$ have sub-exponential tails
        with a sub-exponential norm, which is independent of the dimension $n$.
    \item Show that the norm of the columns of $\mathcal{A}$ concentrate well around
        their mean.
\end{enumerate}
If the columns of $\mathcal{A}$ are exactly normalized, then the second point is trivially
fulfilled, the latter two terms of \eqref{eq:adam_error} vanish
and we can choose $\theta^\prime >0$ and $K^\prime\geq1$ to be arbitrary small.
We can use the following
corollary for matrices with constant norm:
\begin{corollary}
    \label{cor:constant_norm}
    Let all parameters be as in Theorem \ref{thm:adam} with the additional requirement
    that $\|X_i\|_2^2 = m$. Additionally we assume that $m\leq N$.
    Then the RIP constant of order $s$ of $\frac{A}{\sqrt{m}}$
    satisfies
    \beq
    \delta_{s}\left(\frac{A}{\sqrt{m}}\right) < \delta
    \eeq
    with probability at least $1 - \exp(-C^\prime\sqrt{c_{\xi,\delta} m})$
    as long as
    \beq
    s \leq c_{\xi,\delta}\frac{m}{\log^2\left(\frac{eN}{c_{\xi,\delta}m}\right)}.
    \eeq
    Where $c_{\xi,\delta} = \min(1,(\frac{\delta}{C\xi^2})^2)$
    and $C,C^\prime$ are some universal constants.
\end{corollary}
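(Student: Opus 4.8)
The plan is to specialize Theorem~\ref{thm:adam} to the present situation and then tune its free parameters. Since $\|X_i\|_2^2 = m$ holds deterministically, the event $\{\max_i|\,\|X_i\|_2^2/m-1\,|\ge\theta'\}$ is empty for every $\theta'>0$ and $\{\max_i\|X_i\|_2\ge K'\sqrt m\}$ is empty for every $K'>1$, so the last two failure terms in \eqref{eq:adam_error} vanish. Fixing $K=1$ and letting $\theta'$ and $K'-1$ be arbitrarily small --- the right-hand side of the RIP bound depends continuously on them and the residual slack is absorbed into the universal constants --- we may take $\xi=\psi+1$ and are left with $\delta_s(A/\sqrt m)\le C\xi^2\sqrt{s/m}\,\log\!\big(eN/(s\sqrt{s/m})\big)$, valid with probability at least $1-\exp\!\big(-c\sqrt s\,\log(eN/(s\sqrt{s/m}))\big)$.

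Because the RIP constant $\delta_s$ is non-decreasing in $s$, it suffices to establish the claim at the critical sparsity $s^\ast:=\lfloor c_{\xi,\delta}\,m/\log^2(eN/(c_{\xi,\delta}m))\rfloor$; any $s\le s^\ast$ then inherits $\delta_s\le\delta_{s^\ast}$ on the same event. (If $c_{\xi,\delta}m<\log^2(eN/(c_{\xi,\delta}m))$ no integer $s\ge1$ meets the hypothesis and there is nothing to prove; otherwise $1\le s^\ast\le c_{\xi,\delta}m\le m\le N$, so Theorem~\ref{thm:adam} applies.)

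The crux is a two-sided estimate for the single quantity that governs both the RIP bound and the exponent. Put $u:=\sqrt{s^\ast/m}$ and $L:=\log(eN/(c_{\xi,\delta}m))$; then $s^\ast\sqrt{s^\ast/m}=mu^3$, so the RIP bound equals $C\xi^2g(u)$ and the exponent equals $c\sqrt m\,g(u)$ with $g(u):=u\log(eN/(mu^3))$. From $s^\ast\le c_{\xi,\delta}m/L^2$ one gets $u\le\sqrt{c_{\xi,\delta}}/L$, and from $s^\ast\ge\tfrac12 c_{\xi,\delta}m/L^2$ (valid in the non-vacuous case $c_{\xi,\delta}m\ge L^2$) one gets $u\ge\sqrt{c_{\xi,\delta}}/(\sqrt2\,L)$; moreover $m\le N$ forces $eN/(c_{\xi,\delta}m)\ge1/c_{\xi,\delta}$ and hence $L\ge\max\{1,\log(1/c_{\xi,\delta})\}$. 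On the one hand, $u\le\sqrt{c_{\xi,\delta}}/L$ gives $eN/(mu^3)\ge e^{L}$, whence $g(u)\ge uL\ge\sqrt{c_{\xi,\delta}}/\sqrt2$; on the other hand, splitting $g(u)=u\log(eN/m)+3u\log(1/u)$ and using $\log(eN/m)\le L$ together with the elementary inequalities $\log L/L\le1/e$ and $\log(1/c_{\xi,\delta})/L\le1$ gives $g(u)\le 4\sqrt{c_{\xi,\delta}}$. Hence the failure probability is at most $\exp(-(c/\sqrt2)\sqrt{c_{\xi,\delta}m})=\exp(-C'\sqrt{c_{\xi,\delta}m})$, while $\delta_{s^\ast}(A/\sqrt m)\le 4C\xi^2\sqrt{c_{\xi,\delta}}+\theta'<\delta$ once $\theta'$ is small and the universal constant implicit in $c_{\xi,\delta}=\min(1,(\delta/(C\xi^2))^2)$ is chosen so that $4C\xi^2\sqrt{c_{\xi,\delta}}<\delta/2$.

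The only genuine difficulty is bookkeeping: matching the $\log^2(eN/(c_{\xi,\delta}m))$ in the sparsity condition to the logarithmic factor $\log(eN/(s\sqrt{s/m}))$ of Theorem~\ref{thm:adam}, whose argument carries the extra $\sqrt{s/m}$ correction. As $g(u)=u\log(eN/(mu^3))$ is not monotone on $(0,1]$, a clean conclusion needs the two-sided control of $g$ precisely at the scale $u\sim\sqrt{c_{\xi,\delta}}/L$, for which the bound $L\ge\log(1/c_{\xi,\delta})$ --- the one place the hypothesis $m\le N$ is actually used --- is essential. The remaining points (reduction to $s^\ast$ by monotonicity of $\delta_s$, the vacuous small-$m$ regime, and sizing the universal constant inside $c_{\xi,\delta}$) are routine.
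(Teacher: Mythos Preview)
Your proof is correct and follows essentially the same route as the paper: specialize Theorem~\ref{thm:adam} using the exact normalization $\|X_i\|_2^2=m$ to kill the last two failure terms, then evaluate $\sqrt{s/m}\,\log\!\big(eN/(s\sqrt{s/m})\big)$ at the extremal sparsity and bound it above (for $\delta_s\le\delta$) and below (for the exponent) via $\log\log x/\log x\le 1/e$ and $\log(1/c_{\xi,\delta})\le L$. You are more explicit than the paper about the reduction to $s=s^\ast$ via monotonicity of $\delta_s$ and about the two-sided control of $g(u)$, but the argument is the same.
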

\begin{proof}
  Let us abbreviate $\delta_s=\delta_{s}\left(\frac{A}{\sqrt{m}}\right)$.
  Since $\|X_i\|^2_2 = m$, the last two terms in \eqref{eq:adam_error} vanish for all
  $K^\prime>1$ and $\theta^\prime > 0$.
    \begin{equation}
        \delta_{s}\leq
        C \xi^2 \sqrt{\frac{s}{m}} \log\left(\frac{eN}{s \sqrt{s/m}}\right)
        =: D
        \label{eq:deltas}
    \end{equation}
    with probability larger than 
    \begin{align}
        \mathbb{P}(\delta_{s}\leq D) \geq 1
        &-\exp\left(-cK \sqrt{s} \log \left(\frac{eN}{s \sqrt{s/m}}\right)\right)
    \end{align}
    Let $s \leq cm/\log^2(e\frac{N}{c m})$ for any $0 <c\leq1$.
    Note that the conditions $c\leq 1$ and $N\geq m$ guarantee that
    $\log (e\frac{N}{cm})\geq 1$.
    Plugging into \eqref{eq:deltas}
    we see that the RIP-constant
    satisfies 
    \begin{align}
        \delta_{s}&\leq C\xi^2\sqrt{c}\frac{
            \log(e(\frac{N}{c m})^{3/2}\log^3(e\frac{N}{c m}))
        }{\log(e\frac{N}{c m})} \\
        &= C\xi^2\sqrt{c}
        \left(\frac{3}{2} +
        \frac{3\log\log e\frac{N}{c m}}{\log e\frac{N}{c m}}
        \right)
        \\
        &\leq C\xi^2\sqrt{c}\left(\frac{3}{2} + \frac{3}{e} \right)\\
        &\leq 3C\xi^2\sqrt{c}
    \end{align}
    where in the first line we made use of $m\leq N$ and in the last line we used
    $\log\log x/\log x \leq 1/e$. This bound fails with probability:
    \begin{align}
        \mathbb{P}(\delta_{s}>D)
        &\leq \exp\left(-\hat{c}K\sqrt{s}\log \left(e\frac{N\sqrt{m}}{s^{3/2}}\right)\right) \\
        &\leq \exp\left(-\hat{c}K\sqrt{s}\log \left(e\frac{N}{m}\right)\right) \\
        &\leq \exp(-\hat{c}K\sqrt{c}\sqrt{m})
    \end{align}
    where in the second line it was used that $s \leq m$. The statement of
    the Corollary follows by choosing $c$ small enough such that $\delta_s\leq\delta$.
\end{proof}
\subsection{The Case of Sub-Gaussian iid Columns $\av_i$}
\label{sec:sub-gauss}
We will show here that Corollary \ref{cor:constant_norm} holds almost
unchanged if $X_i = \mathcal{A}_i$ where $\mathcal{A}_i$ are the
columns of the centered self KR product of a matrix $\Am$ with sub-Gaussian
iid entries as defined in \eqref{eq:centered_kr} and Example
\ref{ex:sub_gauss}.  First we need to show, that the columns
$\mathcal{A}_i$ are sub-exponential with a $\psi_1$-norm independent
of $n$. This is a consequence of the Hanson-Wright inequality, which
states that every centered quadratic form of independent sub-Gaussian
random variables is sub-exponential:
\begin{theorem}[Hanson-Wright inequality]
    \label{thm:hanson_wright}
    Let $X = (X_1,...,X_n)\in \RR^n$ be a random vector with independent components $X_i$
    which satisfy $\EE X_i = 0$ and $\|X_i\|_{\psi_2}\leq B$.
    Let $Y$ be a $n\times n$ matrix. Then, for every $t\geq0$,
    \beq\begin{split}
      \PP\{|X^\top YX &- \EE X^\top YX|>t\}\\
      &\leq 2\exp\left(-c\min\left(\frac{t^2}{B^4\|Y\|_F^2},\frac{t}{B^2\|Y\|}\right)\right)
    \end{split}\eeq
\end{theorem}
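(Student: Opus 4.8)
The plan is to follow the decoupling-and-Gaussianization route that underlies the modern treatment of this classical inequality. By homogeneity we may rescale $X_i\mapsto X_i/B$ and assume $B=1$; and since $X^\top Y X$ is a scalar, $X^\top Y X=X^\top\!\big(\tfrac{Y+Y^\top}{2}\big)X$, and both $\|Y\|_F$ and $\|Y\|$ only decrease under this symmetrization, so we may assume $Y=Y^\top$. Then split the centered form into its diagonal and off-diagonal parts,
\[
  X^\top Y X-\EE X^\top Y X=\underbrace{\sum_i Y_{ii}\big(X_i^2-\EE X_i^2\big)}_{=:D}\;+\;\underbrace{\sum_{i\neq j}Y_{ij}X_iX_j}_{=:S},
\]
using that the off-diagonal terms are centered because the $X_i$ are independent and mean zero.

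The diagonal part $D$ is a sum of independent centered random variables with $\|Y_{ii}(X_i^2-\EE X_i^2)\|_{\psi_1}\lesssim|Y_{ii}|\,\|X_i\|_{\psi_2}^2\le|Y_{ii}|$, so Bernstein's inequality for sums of independent sub-exponentials gives a bound of the required form once we observe $\sum_i Y_{ii}^2\le\|Y\|_F^2$ and $\max_i|Y_{ii}|\le\|Y\|$. All the real work is therefore in $S$, which I would attack through its moment generating function $\EE\exp(\lambda S)$. First apply a decoupling inequality to replace $S$ by $X^\top\widetilde Y X'$, where $X'$ is an independent copy of $X$ and $\widetilde Y$ is $Y$ with its diagonal zeroed (so $\|\widetilde Y\|_F\le\|Y\|_F$ and $\|\widetilde Y\|\le 2\|Y\|$); this costs only a constant factor in $\lambda$. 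Then, conditioning successively on $X'$ and on $X$ and using that the MGF of a centered sub-Gaussian of unit $\psi_2$-norm is dominated by that of a (constant-variance) Gaussian, replace the sub-Gaussian vectors by independent standard Gaussian vectors $g,g'$, arriving at $\EE\exp(\lambda S)\le\EE\exp(c_1\lambda\,g^\top\widetilde Y g')$. The Gaussian chaos on the right is computed exactly: the SVD of $\widetilde Y$ turns $g^\top\widetilde Y g'$ into $\sum_k s_k\gamma_k\gamma_k'$ with the $\gamma_k,\gamma_k'$ independent standard normal, whose MGF is $\prod_k(1-\nu^2 s_k^2)^{-1/2}\le\exp(\nu^2\|\widetilde Y\|_F^2)$ whenever $\nu\le 1/(\sqrt2\,\|\widetilde Y\|)$. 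A Chernoff optimization over $\lambda$, in which the constraint $\lambda\lesssim 1/\|\widetilde Y\|$ produces exactly the linear-in-$t$ regime, yields $\PP(|S|>t)\le 2\exp(-c\min(t^2/\|Y\|_F^2,\,t/\|Y\|))$. Combining the estimates on $D$ and $S$ by $t\mapsto t/2$ and a union bound, then undoing the $B=1$ normalization, gives the stated inequality (absorbing the harmless leading constant into $c$).

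The main obstacle is that middle step: carrying out the decoupling cleanly for a general (after symmetrization, symmetric but not necessarily positive semidefinite) matrix and honestly tracking the absolute constants through the two successive Gaussianizations, so that in the end the Frobenius norm governs the sub-Gaussian regime while the operator norm — which enters only through the admissible range $\nu<1/(\sqrt2\,\|\widetilde Y\|)$ of the Chernoff parameter — governs the sub-exponential regime. An alternative that sidesteps decoupling is a direct $L^p$-moment estimate, bounding $\EE|S|^p$ by a combinatorial sum over pairings and establishing $\|S\|_p\lesssim\sqrt p\,\|Y\|_F+p\,\|Y\|$, from which the tail bound follows by Markov; this trades the decoupling argument for heavier bookkeeping in the moment computation.
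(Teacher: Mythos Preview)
Your sketch is correct and is precisely the Rudelson--Vershynin argument; the paper itself does not reproduce any proof but simply cites that reference, so your proposal is in fact more detailed than, and fully aligned with, what the paper invokes.
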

\begin{proof}
    See \cite{Rud2013}
\end{proof}
With $\|Y\|$ and $\|Y\|_F$ we denote here operator norm and Frobenius
norm of the matrix $Y$.
Note that a RV with such a mixed tail behavior is 
especially sub-exponential. This can be seen by bounding its moments. 
Let $Z=X^\top YX - \EE X^\top YX$ be a RV with
\beq
\PP(|Z|>t)\leq2\exp\left(-c\min\left(\frac{t^2}{B^4\|Y\|_F^2},\frac{t}{B^2\|Y\|}\right)\right)
\eeq
Since $\|Y\| \leq \|Y\|_F$, we have
$\PP(|Z|>t) \leq 2\exp(-c\min(x(t)^2,x(t)))$ for $x(t) = \frac{t}{B^2\|Y\|_F}$.
It follows
\begin{equation}\begin{split}
    &\EE |Z|^p
    = \int_0^\infty \PP(|Z|^p>u)\mathrm{d}u
    = p\int_0^\infty \PP(|Z|>t) t^{p-1}\mathrm{d}t \\
    &\leq 2p(B^2\|Y\|)^p\left(\int_0^1e^{-x^2}x^{p-1}\mathrm{d}x
    + \int_1^\infty e^{-x}x^{p-1} \mathrm{d}x\right) \\
    &\leq 2p(B^2\|Y\|)^p\left(\Gamma(p/2) + \Gamma(p)\right) \\
    &\leq 4p(B^2\|Y\|)^p\Gamma(p)\leq 4p(pB^2\|Y\|)^p
  \end{split}\end{equation}
where $\Gamma(\cdot)$ is the Gamma function. So
\beq
(\EE|Z|^p)^\frac{1}{p} \leq cpB^2\|Y\|
\label{eq:subexp:momentbound}
\eeq
which is equivalent to $\|Z\|_{\psi_1} \leq cB^2\|Y\|$
by elementary properties of sub-exponential
random variables.
\begin{theorem}
    \label{thm:kr_sub_exp}
    Let $\Am=\left(A_{ij}\right)$ be a random matrix with sub-Gaussian
    iid entries, satisfying $\|A_{ij}\|_{\psi_2} \leq B$ and
    $\EE A_{ij} = 0$ and normalized such that $\EE\{A_{ij}^2\} = 1$.
    Let $\mathcal{A}_i$ be the $i$'th column of the corresponding
    centered self KR product as defined in \eqref{eq:centered_kr} and
    Example \ref{ex:sub_gauss}.  Then
    \beq
    \|\mathcal{A}_i\|_{\psi_1}
    = \sup_{\|y\|_2=1}\|\langle\mathcal{A}_i,y\rangle\|_{\psi_1} \leq cB^2
    \eeq
    for some absolute constant $c>0$.
\end{theorem}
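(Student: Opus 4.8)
The plan is to fix a unit vector $y \in \RR^{n^2}$, identify it with a matrix $Y \in \RR^{n\times n}$ via $y = \vec(Y)$ (so $\|Y\|_F = \|y\|_2 = 1$), and express the marginal $\langle \mathcal{A}_i, y\rangle$ as a centered quadratic form in the independent sub-Gaussian entries of the $i$-th column $\av = \av_i$ of $\Am$. Concretely,
\begin{equation}
  \langle \mathcal{A}_i, y\rangle
  = \kappa(n)\,\langle \vec(\av\av^\herm - \Id_n), \vec(Y)\rangle
  = \kappa(n)\bigl(\av^\herm Y \av - \trace(Y)\bigr),
\end{equation}
and since $\EE\{\av\av^\herm\} = \Id_n$ (isotropy) we have $\EE\{\av^\herm Y \av\} = \trace(Y)$, so this is exactly the centered quadratic form $\av^\herm Y \av - \EE\{\av^\herm Y\av\}$ to which the Hanson--Wright inequality (Theorem~\ref{thm:hanson_wright}) applies with $B$ the sub-Gaussian norm bound on the entries $A_{ij}$.

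Next I would invoke the moment computation already carried out in the excerpt right after Theorem~\ref{thm:hanson_wright}: a random variable $Z$ satisfying the mixed sub-Gaussian/sub-exponential tail bound of Hanson--Wright obeys $\|Z\|_{\psi_1} \leq c B^2 \|Y\|$ for an absolute constant $c$. Applying this with $Z = \av^\herm Y\av - \trace(Y)$ gives
\begin{equation}
  \|\langle \mathcal{A}_i, y\rangle\|_{\psi_1}
  \leq \kappa(n)\, c B^2 \|Y\|
  \leq \kappa(n)\, c B^2 \|Y\|_F
  = \kappa(n)\, c B^2,
\end{equation}
using $\|Y\| \leq \|Y\|_F = 1$. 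Taking the supremum over all unit $y$ and noting that $\kappa(n) = n^2/(n(n-2+\EE\{a^4\})) \leq n^2/(n(n-2+1)) = n/(n-1) \leq 2$ for $n \geq 2$ (from Example~\ref{ex:sub_gauss}, since $\EE\{a^4\} \geq (\EE\{a^2\})^2 = 1$ by Jensen), we absorb the $\kappa(n)$ factor into the constant and conclude $\|\mathcal{A}_i\|_{\psi_1} \leq c' B^2$ with $c'$ absolute and, crucially, independent of $n$.

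The only subtle point — and the step I would be most careful about — is the uniformity of the constant in $n$: one must check both that $\|Y\|_F = 1$ controls the relevant matrix norm appearing in Hanson--Wright (which it does, since $\|Y\| \leq \|Y\|_F$, so the bound is governed by the Frobenius norm and hence dimension-free) and that the normalization constant $\kappa(n)$ stays bounded (which follows from the explicit formula in Example~\ref{ex:sub_gauss}). Everything else is a direct substitution into results already established above, so no genuinely new estimate is needed; the content of the theorem is precisely the observation that the KR-product structure turns each column marginal into a Hanson--Wright-controlled quadratic form with a Frobenius-norm (not operator-norm-with-bad-scaling) bound.
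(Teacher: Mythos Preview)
Your proposal is correct and follows essentially the same route as the paper's own proof: rewrite the marginal as $\kappa(n)(\av_i^\top Y\av_i - \EE\,\av_i^\top Y\av_i)$ with $\|Y\|_F=1$, apply Hanson--Wright together with the moment bound \eqref{eq:subexp:momentbound} to get $\|\langle\mathcal{A}_i,y\rangle\|_{\psi_1}\leq c\kappa(n)B^2\|Y\|$, and then use $\|Y\|\leq\|Y\|_F=1$ and $\kappa(n)\leq n/(n-1)\leq 2$ via Jensen. The only cosmetic difference is that you make the identity $\EE\{\av^\top Y\av\}=\trace(Y)$ explicit, whereas the paper leaves it implicit.
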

\begin{proof}
    Note that we can rewrite
    \begin{align}
    \langle\mathcal{A}_i,y\rangle &=
    \sum_{j,k} \kappa(n) (A_{ij}A_{ik}Y_{jk} - \EE\{A_{ij}A_{ik}\}Y_{jk}) \\
    &= \kappa(n)(\av_i^\top Y\av_i - \EE \av_i^\top Y \av_i)
    \end{align}
    where the matrix $Y$ is chosen such that $\vec(Y) = y$,
    and therefore $\|Y\|_F = \|y\|_2 = 1$. 
    With this, it follows immediately from Theorem \ref{thm:hanson_wright} that
    $\langle\mathcal{A}_i,y\rangle$ is sub-exponential with
    \beq
    \|\langle\mathcal{A}_i,y\rangle\|_{\psi_1} \leq c\kappa(n)B^2 \|Y\|
    \eeq
    for some absolute constant $c>0$.
    It holds that
    $\|Y\| \leq \|Y\|_F = 1$. We see from example \ref{ex:sub_gauss}
    that $\kappa(n) = \frac{n}{n-2+\EE\{A_{ij}^4\}}$. By Jensen inequality
    $\EE\{A_{ij}^4\}\geq\EE\{A_{ij}^2\}^2 = 1$, so $\kappa(n)\leq \frac{n}{n-1} \leq 2$. 
\end{proof}
To apply Theorem \ref{thm:adam} to $\mathcal{A}$ we need to show that the norm of it's
columns concentrate well around their mean. This is the subject of the following theorem.
\begin{theorem}
    \label{thm:deviation}
    Let $\{\mathcal{A}_i\}_{i=1}^N$ be the columns of the centered
    self KR product of a centered, normalized sub-Gaussian iid matrix $\Am$ as in Theorem
    \ref{thm:kr_sub_exp}. Let $P_a$ denote the distribution of the
    entries of $\Am$.  Then for $i=1,\dots, N$ it holds:
    \beq
        \PP\left(\max_{i\leq N}\left|\frac{\|\mathcal{A}_i\|^2_2}{n^2} -
            1\right|\geq t\right)
            \leq C\exp\left(\log N-\frac{c}{B^2}\sqrt{t}n\right)
    \eeq
    if
    $n$ satisfies
    \beq
    n\geq1 + (\EE a^4 - 1) (3/t-1)
    \eeq
\end{theorem}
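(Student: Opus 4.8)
\emph{Proof plan.}
The plan is to reduce the statement — which concerns the quartic quantity $\|\mathcal{A}_i\|_2^2$ — to a one-dimensional concentration statement about $S_i:=\|\av_i\|_2^2=\sum_{j=1}^n a_{ij}^2$. Combining \eqref{eq:kappa1} with the value $\kappa(n)=n/(n-2+\EE a^4)$ from Example~\ref{ex:sub_gauss} gives the exact identity
\[
  \frac{\|\mathcal{A}_i\|_2^2}{n^2}=\frac{(S_i-1)^2+n-1}{(n-2+\EE a^4)^2},
\]
so that $\|\mathcal{A}_i\|_2^2/n^2$ is an explicit quadratic function of the scalar variable $S_i$. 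Taking expectations, using $\EE S_i=n$ and $\EE S_i^2=n^2+n(\EE a^4-1)$, one finds $\EE[\|\mathcal{A}_i\|_2^2/n^2]=\kappa(n)$, so the only \emph{deterministic} deviation from $1$ is the bias $\kappa(n)-1=(2-\EE a^4)/(n-2+\EE a^4)$.

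Next I would absorb this bias. The hypothesis $n\ge 1+(\EE a^4-1)(3/t-1)$ is, after clearing denominators, exactly equivalent to $t\,(n-2+\EE a^4)\ge 3(\EE a^4-1)$, and this forces $|\kappa(n)-1|\le t/3$ (immediate from the formula above once $\EE a^4\ge 2$; in the residual range $1\le\EE a^4<2$ one has $|\kappa(n)-1|\le 1/(n-1)$, which is handled directly). Hence
\[
  \Big\{\big|\tfrac{\|\mathcal{A}_i\|_2^2}{n^2}-1\big|\ge t\Big\}
  \subseteq
  \Big\{\big|\tfrac{\|\mathcal{A}_i\|_2^2}{n^2}-\kappa(n)\big|\ge\tfrac{2t}{3}\Big\},
\]
so it remains to bound a genuine fluctuation.

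Writing $\Delta_i:=S_i-n=\sum_{j=1}^n(a_{ij}^2-1)$ and completing the square in the identity above, the right-hand event is seen to be contained in $\{|\Delta_i|\ge\rho(t,n)\}$ together with a one-sided far tail $\{-\Delta_i\gtrsim n\}$, where $\rho(t,n)$ is an explicit threshold of order $n$ times a function of $t$ and where the $n$-hypothesis is reused to keep the relevant square roots well defined; the far tail is super-exponentially small in $n$ and is absorbed into the constants. Now $\Delta_i=\av_i^\top\Id_n\av_i-\EE\,\av_i^\top\Id_n\av_i$ is a centered quadratic form in the independent sub-Gaussian entries of $\av_i$, so the Hanson--Wright inequality (Theorem~\ref{thm:hanson_wright}) with $Y=\Id_n$, $\|\Id_n\|_F^2=n$ and $\|\Id_n\|=1$ gives
\[
  \PP\big(|\Delta_i|\ge\rho\big)\le 2\exp\!\Big(-c\,\min\big(\tfrac{\rho^2}{B^4 n},\tfrac{\rho}{B^2}\big)\Big).
\]
Inserting $\rho=\rho(t,n)$ and keeping the branch of the minimum that governs the chosen $t$-regime yields a single-column bound of the stated form $C\exp\!\big(-\tfrac{c}{B^2}\sqrt t\,n\big)$; a union bound over $i=1,\dots,N$ then supplies the factor $\exp(\log N)$.

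The main obstacle is exactly this last reduction: passing cleanly from the quartic statistic $\|\mathcal{A}_i\|_2^2$ to a one-sided tail of the quadratic form $S_i$, doing the bookkeeping of numerical constants so that they stay compatible with the precise hypothesis on $n$ (which is what makes $|\kappa(n)-1|\le t/3$), and then picking the right branch of the Hanson--Wright minimum to land on the advertised $\sqrt t\,/B^2$ exponent. The algebraic identity, the bias reduction and the union bound are routine.
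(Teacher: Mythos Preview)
Your overall strategy coincides with the paper's: express $\|\mathcal{A}_i\|_2^2$ as a quadratic in the scalar $S=\sum_j a_j^2$, separate a deterministic piece controlled by the hypothesis on $n$, and bound the fluctuation of $S-n$ by Bernstein/Hanson--Wright (the $\sqrt t$ in the exponent coming from the quadratic piece $(S-n)^2$). So the plan is right.

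There is, however, a concrete error in your starting identity. You write
\[
  \frac{\|\mathcal{A}_i\|_2^2}{n^2}=\frac{(S_i-1)^2+n-1}{(n-2+\EE a^4)^2},
\]
which amounts to a factor $\kappa(n)^2$ rather than $\kappa(n)$. The normalization $\EE\|\mathcal{A}_i\|_2^2=n^2$ forces the denominator to be $n(n-2+\EE a^4)$, so that the left side has expectation $1$, not $\kappa(n)$; the paper's proof accordingly uses $\|\mathcal{A}_i\|_2^2=\kappa(n)(S^2-2S+n)$. This is not merely cosmetic. With your version you must absorb a ``bias'' $|\kappa(n)-1|=|2-\EE a^4|/(n-2+\EE a^4)$, but the stated hypothesis on $n$ is \emph{exactly} the inequality $(\EE a^4-1)/(n-2+\EE a^4)\le t/3$, which does not imply $|\kappa(n)-1|\le t/3$ when $1\le\EE a^4<2$. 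For instance with Rademacher entries ($\EE a^4=1$) the hypothesis on $n$ is vacuous while $|\kappa(n)-1|=1/(n-1)$ is not bounded by $t/3$ for small $t$; your claim that this range ``is handled directly'' is thus a gap.

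With the correct denominator there is no bias at all, and expanding $(S-1)^2$ about $S=n$ gives the paper's three-term splitting $\|\mathcal{A}_i\|_2^2/n^2-1=a+b+c$ with $a=\kappa(n)(S-n)^2/n^2$, $b=2\kappa(n)(n-1)(S-n)/n^2$, and a constant of absolute value $(\EE a^4-1)/(n-2+\EE a^4)$. The hypothesis on $n$ is then precisely the statement that this constant is at most $t/3$, and Bernstein on $S-n$ handles $a$ (giving the $\sqrt t\,n/B^2$ rate) and $b$ (giving the better $t\,n/B^2$ rate), exactly as you sketch in your last paragraph.
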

\begin{proof}
    By union bound we have that 
    \beq
    \PP\left(\max_{i\leq N}\left|\frac{\|\mathcal{A}_i\|^2_2}{n^2} - 1\right|\geq t\right)
    \leq N
\PP\left(\left|\frac{\|\mathcal{A}_i\|^2_2}{n^2} - 1\right|\geq t\right)
    \eeq
    Furthermore, with the abbreviation $S := \sum_{i=1}^n a_i^2$,
    we have (see example \ref{ex:sub_gauss})
    \beq
    \|\mathcal{A}_i\|^2_2 = \kappa(n)(S^2 - 2S + n)
    \eeq
    which can be rewritten as
    \beq
    \|\mathcal{A}_i\|^2_2 = \kappa(n)((S-n)^2 + 2(n-1)(S-n) + n(n-1)).
    \eeq
    Example \ref{ex:sub_gauss} shows that
    $n^2/\kappa(n) = n(n-2+\EE a^4)$, thus 
    \begin{align}
        \frac{\|\mathcal{A}_i\|^2_2}{n^2} - 1 
        &=\kappa(n)\frac{(S-n)^2 + 2(n-1)(S-n) - n(\EE a^4 - 1))}{n^2} \\
        &=: a + b + c
    \end{align}
    with $a:= \frac{\kappa(n)(S-n)^2}{n^2}$,
    $b:=\frac{2\kappa(n)(n-1)(S-n)}{n^2}$ and
    $c:= \frac{\EE a^4 - 1}{n-2+\EE a^4}$.
    We can estimate the one sided tail $\PP(a+b+c>t)$ by
    \begin{align}
        \PP(a + b + c > t) 
        &\leq \PP\left(a>\frac{t}{3}\right)
        + \PP\left(b>\frac{t}{3}\right)
        + \PP\left(c>\frac{t}{3}\right)
    \end{align}
    Therefore $S-n = \sum_{i=1}^n(a_i^2 - 1)$ is a sum of independent zero
    mean sub-exponential random variables with $\|a_i^2 - 1\|_{\psi_1}
    \leq cB^2$, as a centering argument
    and the identity $\|X^2\|_{\psi_1} = \|X\|^2_{\psi_2}$ for sub-Gaussian random variables
    $X$ shows,
    e.g. \cite[Ch. 2.7]{vershynin_2018}.
    Therefore the elemental 
    Bernstein inequality gives that
    \beq
    \PP(|S-n|>nt) \leq 2\exp\left(-\tilde{c}n\min\left(\frac{t^2}{B^4},\frac{t}{B^2}\right)\right)
    \eeq
    Then the same argument as in \eqref{eq:subexp:momentbound} shows that
    \beq
    \PP(|S-n|>nt) \leq 2\exp(-cnt/B^2)
    \eeq
    for some constant $c$. So in particular
    \begin{align}
        \PP\left(b>\frac{t}{3}\right) &\leq 2\exp\left(-ct\frac{n^2}{6B^2\kappa(n)(n-1)}\right)\\
                                      &\leq 2\exp\left(-\frac{c^\prime}{B^2} tn\right)
    \end{align}
    where in the last step we used that $n^2/\kappa(n) \leq n^2$
    and $n/(n-1) \leq 2$.
    The probability of deviation of $a$ can be bound as follows:
    \begin{align}
    \PP\left(a>\frac{t}{3}\right)
    &= \PP\left(|S-n|>\sqrt{\frac{n^2t}{3\kappa(n)}}\right) \\
    &\leq 2\exp\left(-\frac{c}{B^2}\sqrt{t\frac{n^2}{3\kappa(n)}}\right)\\
    &\leq 2\exp(-c^\prime \sqrt{t}n/B^2)
    \end{align}
    Finally
    \begin{align}
        \PP\left(c>\frac{t}{3}\right)
        = \begin{cases}
            0 \text{ if } n\geq1 + (\EE a^4 - 1) (3/t-1)\\
            1 \text{ o.w.}
        \end{cases}
    \end{align}
    For the other tail $\PP(a+b+c < -t) = \PP(-a - b - c>t)$, notice
    that $a$ and $c$ are non-negative. For $a$ this is obvious, for $c$ it follows
    from Jensen inequality and $\EE a^2 = 1$. Therefore
    $\PP(-a-b-c>t) \leq \PP(-b>t) = \PP(b>t) \leq \PP(a+b+c >t)$.
\end{proof}
The third term in \eqref{eq:adam_error} is the probability of a one sided deviation
of $\|X_i\|$ and therefore we can bound it by the same term as in theorem \ref{thm:deviation}:
\begin{align}
    \label{eq:deviation_bound_2}
\PP\left(\max_{i\leq N}\|X_i\|_2\geq K^\prime n\right)
&\leq N\PP\left(\frac{\|X_i\|_2}{n^2} -1 >K^{\prime2} - 1\right) \nonumber\\
&\leq N\PP\left(\left|\frac{\|X_i\|_2}{n^2} -1\right| >K^{\prime2} - 1\right)
\end{align}
Now we can state that the result of Corollary \ref{cor:constant_norm}
holds almost unchanged, except for different constants, 
for the self KR product of an iid sub-Gaussian matrix:
\begin{theorem}
    \label{thm:sub_gauss}
    Let $n\geq 1$ and $s,N$ be integers such that $n^2 \leq N$
    and $1\leq s\leq n^2$.
    Let $\Am \in \RR^{n\times N}$ be a random matrix with sub-Gaussian iid
    entries, distributed according to $P_\av$, with $\EE a = 0$,$\EE a^2 = 1$
    and $\|a\|_{\psi_2} \leq B$.
    Let $\mathcal{A} \in \RR^{n^2\times N}$ be the centered and rescaled
    self-KR product of $\Am$ as defined in \eqref{eq:centered_kr}.
    Then the RIP constant of order $s$ of $\frac{\mathcal{A}}{n}$
    satisfies
    \beq
    \delta_{s}\left(\frac{\mathcal{A}}{n}\right) < \delta
    \eeq
    for any $\delta>0$ with probability larger then
    \beq
    \PP(\delta_{s}\geq\delta) \geq 1 - C\exp(-cn/B^2)
    \eeq
    as long as
    \beq
    s \leq c_{\xi,\delta}\frac{n^2}{\log^2(\frac{eN}{c_{\xi,\delta}n^2})}
    \eeq
    and
    \beq
        \label{eq:log_condition}
        n \geq \max(c_1\log N,1 + c_2B^2(6/\delta - 1)).
    \eeq
    Where $c_{\xi,\delta} = \min(1,(\frac{\delta}{C^\prime\xi^2})^2)$ with $\xi = c^\prime B^2 + 1$.
    For some universal constants $c,c^\prime,C,C^\prime,c_1,c_2>0$.
\end{theorem}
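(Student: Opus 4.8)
The plan is to apply the generic column-independent RIP bound of Theorem~\ref{thm:adam} to $\mathcal{A}$ with $X_i=\mathcal{A}_i$ and $m=n^2$, using as inputs the two structural facts established above: the dimension-free sub-exponential estimate $\psi:=\max_{i\le N}\|\mathcal{A}_i\|_{\psi_1}\le c'B^2$ from Theorem~\ref{thm:kr_sub_exp}, and the column-norm concentration of Theorem~\ref{thm:deviation}. I fix the free parameters of Theorem~\ref{thm:adam} to constants depending only on $\delta$: take $K=1$, let $K'=\sqrt2$, and set $\theta'=\delta/2$, so that $\xi=\psi K+K'\le c'B^2+1$, which is the value appearing in the statement, and $c_{\xi,\delta}=\min\bigl(1,(\delta/(C'\xi^2))^2\bigr)$.

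First I would reproduce, essentially verbatim, the computation in the proof of Corollary~\ref{cor:constant_norm}: it bounds the leading term $C\xi^2\sqrt{s/m}\log(eN/(s\sqrt{s/m}))$ of Theorem~\ref{thm:adam} and the first failure probability in \eqref{eq:adam_error}, and it never used exact normalization of the columns. Under $n^2\le N$ and $s\le c_{\xi,\delta}n^2/\log^2(eN/(c_{\xi,\delta}n^2))$ this gives $\delta_s(\mathcal{A}/n)\le 3C\xi^2\sqrt{c_{\xi,\delta}}+\theta'$ except on an event of probability at most $\exp(-\hat c\sqrt{c_{\xi,\delta}}\,n)$; choosing the universal constant hidden in $c_{\xi,\delta}$ so that $3C\xi^2\sqrt{c_{\xi,\delta}}\le\delta/2$ makes the estimate strictly smaller than $\delta$, and the failure probability is of the form $C\exp(-cn/B^2)$ once the $\delta$- and $\xi$-dependence is folded into the universal constants.

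It then remains to control the two norm-dependent failure terms of \eqref{eq:adam_error}. For $\PP(\max_{i\le N}\|\mathcal{A}_i\|_2\ge K'n)$ I would use \eqref{eq:deviation_bound_2} to dominate it by $N\,\PP\bigl(\,|\,\|\mathcal{A}_i\|_2^2/n^2-1\,|>K'^2-1\,\bigr)$ and invoke Theorem~\ref{thm:deviation} with $t=K'^2-1=1$; for $\PP\bigl(\max_{i\le N}|\,\|\mathcal{A}_i\|_2^2/n^2-1\,|\ge\theta'\bigr)$, apply Theorem~\ref{thm:deviation} directly with $t=\theta'=\delta/2$. In both cases the bound has the form $C\exp\bigl(\log N-\tfrac{c}{B^2}\sqrt{t}\,n\bigr)$, and the hypothesis $n\ge c_1\log N$ in \eqref{eq:log_condition} absorbs the $\log N$, leaving $C\exp(-c'n/B^2)$. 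The side condition $n\ge 1+(\EE a^4-1)(3/t-1)$ required by Theorem~\ref{thm:deviation}, evaluated at the binding threshold $t=\min\{\theta',1\}=\delta/2$ (for $\delta\le 2$), reads $n\gtrsim 1+(\EE a^4-1)(6/\delta-1)$; since $\EE a^2=1$ and $\|a\|_{\psi_2}\le B$ bound $\EE a^4$ in terms of $B$, this is implied by the second clause of \eqref{eq:log_condition} for an appropriate $c_2$. Summing the four failure probabilities and enlarging constants gives $\PP\bigl(\delta_s(\mathcal{A}/n)<\delta\bigr)\ge 1-C\exp(-cn/B^2)$.

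Essentially all of the work lies in the two already-established inputs, Theorems~\ref{thm:kr_sub_exp} and \ref{thm:deviation}; the remaining difficulty is purely organizational, namely choosing $K,K',\theta'$ so that the RIP estimate provably falls below the target $\delta$ while the auxiliary failure probabilities and the side condition of Theorem~\ref{thm:deviation} all collapse onto the single hypothesis \eqref{eq:log_condition}. The one structural point worth emphasizing is that the bias term $c=(\EE a^4-1)/(n-2+\EE a^4)$ occurring in Theorem~\ref{thm:deviation} is deterministic and strictly positive as soon as $\EE a^4>1$, which is precisely why a lower bound of the form $n\gtrsim B^2(1/\delta-1)$ on the ambient dimension is unavoidable here.
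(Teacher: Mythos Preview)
Your proposal is correct and follows essentially the same route as the paper's own proof: apply Theorem~\ref{thm:adam} with $m=n^2$, feed in the dimension-free $\psi_1$ bound from Theorem~\ref{thm:kr_sub_exp}, control the two norm-deviation terms via Theorem~\ref{thm:deviation} and \eqref{eq:deviation_bound_2}, and then rerun the arithmetic of Corollary~\ref{cor:constant_norm}. The only cosmetic difference is your choice $K'=\sqrt{2}$ versus the paper's $K'=\sqrt{1+\theta'}=\sqrt{1+\delta/2}$; the latter has the minor advantage that both norm-deviation terms then invoke Theorem~\ref{thm:deviation} at the \emph{same} threshold $t=\theta'$, so one need not separately track the ``binding'' value $\min\{\theta',K'^2-1\}$ as you do. Note also that with $K'=\sqrt{2}$ you get $\xi=c'B^2+\sqrt{2}$, not $\le c'B^2+1$ as you wrote---harmless for the argument, but worth fixing (the paper's claimed value $\xi=c'B^2+1$ is itself slightly imprecise for the same reason).
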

\begin{proof}
    Theorem \ref{thm:kr_sub_exp} shows that the columns of
    $\mathcal{A}$ are sub-exponential.
    So the prerequisites of Theorem \ref{thm:adam} are fulfilled with
    $\psi = c^\prime B^2$, for some absolute constant $c^\prime >0$,
    and $m = n^2$. We set $\theta^\prime = \delta/2$
    and $K^\prime = \sqrt{1+\theta^\prime}$. Furthermore we can set $K=1$,
    such that
    $\xi = \psi K + K^\prime = c^\prime B^2 + 1$.
    Theorem \ref{thm:deviation},
    with $t = \theta^\prime$ and
    \eqref{eq:deviation_bound_2} show that there exist constants $C,\tilde{c}>0$
    \begin{equation}
      \begin{split}
        &\PP\left(\max_{i\leq N}\|X_i\|_2\geq K^\prime n\right) + 
        \PP\left(\max_{i\leq N}\left|\frac{\|\mathcal{A}_i\|^2_2}{n^2} - 1\right|\geq \theta^\prime\right)\\
        &\leq C\exp\left(\log N - \frac{\tilde{c}}{B^2}n\right)
      \end{split}
    \end{equation}
    if $n \geq 1 + (\EE a^4 - 1)(3/\theta^\prime -1)$. 
    (the latter is simply a constant, since $\EE a^4$ is bounded
    by $c\|a\|_{\psi_2}$ for sub-Gaussian $a$.)
    Choosing $c_1$ in the condition \eqref{eq:log_condition} large enough,
    such that $1/c_1 < \frac{\tilde{c}}{B^2}$, we can guarantee that 
    \beq
    \exp(\log N - \tilde{c}n/B^2)
    \leq \exp(-cn/B^2).
    \eeq
    with $c>0$. So Theorem \ref{thm:adam} gives that
    \beq
    \delta_s\left(\frac{\mathcal{A}}{n}\right) \leq C\xi^2\sqrt{\frac{s}{n^2}}
    \log\left(\frac{eN}{s\sqrt{\frac{s}{n^2}}}\right)
    \eeq
    holds with probability larger then
    \begin{align}
        1 &- C\exp\left(-c\sqrt{s}\log\left(\frac{eN}{s\sqrt{\frac{s}{n^2}}}\right)\right) 
          - \tilde{C}\exp(-\tilde{c}n/B^2)
    \end{align}
    Then, the same calculation as in the proof of Corollary \ref{cor:constant_norm}
    shows, that there is a constant $c_{\xi,\delta}> 0$ such that setting
    $s = c_{\xi,\delta} n^2/\log^2(e\frac{N}{c_{\xi,\delta} n^2})$ leads to the result of this theorem.
\end{proof}
\subsection{Spherical Columns $\av_i$}
Let $\Am\in\RR^{n\times N}$
be a matrix such that its columns $\av_i$ are drawn iid from a sphere
with radius $\sqrt{n}$. See Example \ref{ex:spherical}.
Since the columns are now exactly normalized
we can apply Corollary \ref{cor:constant_norm}, if
we can show that columns of the centered self-KR product $\mathcal{A}$
have sub-exponential marginals, with a sub-exponential norm
independent of the dimension.
For this we can use the following result from \cite{Ada2015} which states
that a random vector which satisfies the convex concentration property
also satisfies the Hanson-Wright inequality:
\begin{theorem}[Theorem 2.5 in \cite{Ada2015}]
    \label{thm:hanson-wright-convex}
    Let $X$ be a mean zero random vector in $\RR^n$, which satisfies the convex concentration
    property with constant $B$, then for any $n\times n$ matrix $Y$ and every $t>0$,
    \begin{align}
        \PP\{|X^\top YX &- \EE X^\top YX|>t\}\nonumber\\
                        &\leq 2 \exp\left(-c\min\left(\frac{t^2}{2B^4\|Y\|_F^2},\frac{t}{B^2\|Y\|}\right)\right)
    \end{align}
\end{theorem}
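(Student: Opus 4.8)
The asserted bound is Theorem~2.5 of \cite{Ada2015}, and in the present paper it is used verbatim, so the ``proof'' would really be a pointer to that reference; let me nonetheless describe the route one would take. Recall that $X$ has the convex concentration property with constant $B$ when $\PP(|\varphi(X)-\EE\varphi(X)|\ge t)\le 2\exp(-t^2/B^2)$ for every convex $1$-Lipschitz $\varphi\colon\RR^n\to\RR$; in particular every linear functional $x\mapsto\langle a,x\rangle$ is then sub-Gaussian with $\psi_2$-norm $\lesssim B\|a\|_2$, and since $X$ is centered, $\EE\langle a,X\rangle^2\lesssim B^2\|a\|_2^2$. Using $X^\top YX=X^\top\tfrac{1}{2}(Y+Y^\top)X$ one may assume $Y=Y^\top$, and the plan is to split
\beq
  X^\top YX=\sum_i Y_{ii}X_i^2+\sum_{i\ne j}Y_{ij}X_iX_j=:Q_d+Q_o,
\eeq
where $Q_o=X^\top Y_oX$ with $Y_o$ the off-diagonal part of $Y$, to estimate the deviations of $Q_d$ and of $Q_o$ separately by $t/2$, and to recombine by a union bound, using that the diagonal and off-diagonal parts of $Y$ have Frobenius and operator norms bounded by those of $Y$.

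For the diagonal term I would apply the convex concentration property to $x\mapsto\|\Diag(|Y_{ii}|)^{1/2}x\|_2$, which is convex and Lipschitz with constant $(\max_i|Y_{ii}|)^{1/2}\le\|Y\|^{1/2}$; this makes $\|\Diag(|Y_{ii}|)^{1/2}X\|_2$ concentrate sub-Gaussianly about its mean, squaring a concentrated nonnegative variable turns that into a mixed sub-gamma tail, and the sign decomposition $Y_{ii}=Y_{ii}^+-Y_{ii}^-$ handles the signs, yielding for $Q_d-\EE Q_d$ a tail of the form $2\exp(-c\min(t^2/(B^4\|Y\|_F^2),t/(B^2\|Y\|)))$.

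The off-diagonal term carries the real difficulty. The plan is to decouple, in the spirit of the de~la~Pe\~{n}a--Montgomery-Smith inequality, replacing $Q_o=X^\top Y_oX$ by $X^\top Y_oX'$ for an independent copy $X'$ of $X$: then, conditionally on $X$, the map $x'\mapsto\langle Y_o^\top X,x'\rangle$ is linear (hence convex), $\|Y_o^\top X\|_2$-Lipschitz, and has conditional mean $\langle Y_o^\top X,\EE X'\rangle=0$, so the convex concentration property gives $\PP(|X^\top Y_oX'|>u\mid X)\le 2\exp(-cu^2/(B^2\|Y_o^\top X\|_2^2))$. It then remains to control the random Lipschitz constant $\|Y_o^\top X\|_2$: the map $x\mapsto\|Y_o^\top x\|_2$ is convex and $\|Y_o\|$-Lipschitz, and $\EE\|Y_o^\top X\|_2^2=\sum_i\EE\langle(Y_o)_{i\cdot},X\rangle^2\lesssim B^2\|Y_o\|_F^2$, so that $\|Y_o^\top X\|_2\lesssim B\|Y_o\|_F+\|Y_o\|\sqrt{v}$ outside an event of probability $\le 2e^{-v}$; integrating the conditional estimate against this deviation bound, and splitting the range of $u$ at the level where the two regimes balance, should deliver the claimed mixed tail for $Q_o$.

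The step I expect to be the main obstacle is twofold. First, the decoupling itself: because the coordinates of $X$ are dependent (convex concentration only), the classical decoupling for independent coordinates does not apply directly, and making the passage to $X^\top Y_oX'$ rigorous is precisely the technical heart of \cite{Ada2015}. Second, the bookkeeping in the final integration --- marrying the conditional sub-Gaussian bound for the decoupled chaos to the tail of its own random scale $\|Y_o^\top X\|_2$, so that the light regime produces the quadratic term $t^2/(B^4\|Y\|_F^2)$ and the heavy regime the linear term $t/(B^2\|Y\|)$, with the universal constants kept consistent and the diagonal part matched --- is the usual delicate part of a Hanson-Wright-type argument. None of this has to be redone here: Theorem~\ref{thm:hanson-wright-convex} is invoked only as a black box, exactly as Theorem~\ref{thm:hanson_wright} was in Section~\ref{sec:sub-gauss}, to certify that for spherically distributed columns the marginals $\langle\mathcal{A}_i,y\rangle$ of the centered self KR product are sub-exponential with a $\psi_1$-norm independent of $n$, after which Corollary~\ref{cor:constant_norm} applies.
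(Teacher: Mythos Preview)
Your proposal is correct and matches the paper's treatment: the paper does not prove Theorem~\ref{thm:hanson-wright-convex} but simply quotes it from \cite{Ada2015} as a black box, exactly as you note in your first sentence. Your additional sketch of Adamczak's argument is a reasonable outline of that reference, but none of it is needed here.
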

The convex concentration property is defined as follows
\begin{definition}[Convex Concentration Property]
    Let $X$ be a random vector in $\RR^n$. $X$ has the convex concentration property
    with constant $K$ if for every 1-Lipschitz convex function $\phi: \RR^n\to R$,
    we have $\EE|\phi(X)|<\infty$ and for every $t>0$,
    \beq
    \PP\{|\phi(x) - \EE \phi(X)|\geq t\} \leq 2\exp(-t^2/K^2)
    \eeq
\end{definition}
A classical result states that a spherical random variable
$X \sim \text{Unif}(\sqrt{n}S^{n-1})$
has the even stronger (non-convex) concentration property
(e.g. \cite[Theorem 5.1.4]{vershynin_2018}):
\begin{theorem}[Concentration on the Sphere]
    Let $X\sim\textup{Unif}(\sqrt{n}S^{n-1})$ be uniformly distributed
    on the Euclidean sphere of radius $\sqrt{n}$. Then there is an absolute
    constant $c>0$, such that for every 1-Lipschitz
    function $f:\sqrt{n}S^{n-1}\to\RR$
    \beq
    \PP\{f(X) - \EE f(X)\} \leq 2\exp(-ct^2)
    \eeq
\end{theorem}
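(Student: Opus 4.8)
The plan is to derive the bound from the isoperimetric inequality on the sphere, followed by a passage from the median to the mean. First I would rescale to the unit sphere: writing $X=\sqrt{n}\,U$ with $U\sim\textup{Unif}(S^{n-1})$ and letting $\sigma$ be the normalized surface measure, a $1$-Lipschitz $f$ on $\sqrt{n}\,S^{n-1}$ corresponds to $g(u):=f(\sqrt{n}\,u)$, which is $\sqrt{n}$-Lipschitz on $S^{n-1}$ for the geodesic metric. Let $M$ be a median of $g(U)$ and set $A:=\{u\in S^{n-1}: g(u)\le M\}$, so that $\sigma(A)\ge 1/2$.

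Next I would invoke L\'evy's spherical isoperimetric inequality: among subsets of a given measure, spherical caps minimize the measure of the $\varepsilon$-neighborhood, which yields $\sigma(A_\varepsilon)\ge 1-C\exp(-(n-1)\varepsilon^{2}/2)$ for every $A$ with $\sigma(A)\ge 1/2$. Since $g$ is $\sqrt{n}$-Lipschitz, $u\in A_\varepsilon$ forces $g(u)\le M+\sqrt{n}\,\varepsilon$, so choosing $\varepsilon=t/\sqrt{n}$ gives
\[
\PP\big(g(U)>M+t\big)\le\sigma\big((A_{t/\sqrt{n}})^{c}\big)\le C\exp\!\Big(-\tfrac{n-1}{2n}t^{2}\Big)\le C\exp(-ct^{2}),
\]
and the lower tail follows in the same way with $\{u: g(u)\ge M\}$. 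This is exactly the step in which the radius $\sqrt{n}$ is used: it cancels the dimension in the exponent and leaves an absolute constant $c$. Integrating the two-sided tail bound then gives $|\EE g(U)-M|\le\EE|g(U)-M|\le\int_{0}^{\infty}2Ce^{-ct^{2}}\,dt=:C_{0}$, an absolute constant; hence for $t\ge 2C_{0}$ we get $\PP(|g(U)-\EE g(U)|\ge t)\le\PP(|g(U)-M|\ge t/2)\le 2Ce^{-ct^{2}/4}$, while for $t\le 2C_{0}$ the claimed bound holds trivially once the constant in the exponent is small enough, and absorbing constants finishes the argument.

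The only non-elementary input is the spherical isoperimetric inequality, which I would treat as a black box; this is the main obstacle in the sense that it is the one deep ingredient. A self-contained variant that bypasses the median/mean bookkeeping is to use that $S^{n-1}$ has Ricci curvature $n-2>0$, so that (Bakry--\'Emery) $\sigma$ satisfies a logarithmic Sobolev inequality with constant of order $1/n$, and then apply Herbst's argument to $g$ directly, the $\sqrt{n}$ rescaling again making the final constant dimension-free. A third route, avoiding the geometry of the sphere altogether, is Gaussian comparison: realize $X=\sqrt{n}\,\gamma/\|\gamma\|_{2}$ with $\gamma\sim N(0,\Id_{n})$, verify that $g\mapsto\sqrt{n}\,g/\|g\|_{2}$ is $2$-Lipschitz on $\{\|g\|_{2}\ge\sqrt{n}/2\}$, extend $F(g):=f(\sqrt{n}\,g/\|g\|_{2})$ to a $2$-Lipschitz function on $\RR^{n}$ by the McShane--Whitney extension theorem, apply Gaussian concentration, and control the $O(1)$ discrepancy between $\EE F(\gamma)$ and $\EE f(X)$ via $\PP(\|\gamma\|_{2}<\sqrt{n}/2)\le e^{-c'n}$; here the delicate part is the Lipschitz extension together with tracking the mean shift.
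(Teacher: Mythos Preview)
The paper does not actually prove this theorem: it is quoted as a classical result with a reference to \cite[Theorem~5.1.4]{vershynin_2018} and no argument is given. Your proposal is correct and, in fact, your primary route---L\'evy's spherical isoperimetric inequality to get sub-Gaussian concentration around the median, followed by the standard median-to-mean replacement via integrating the tail---is precisely the proof that appears in the cited reference; the two alternative routes you sketch (log-Sobolev via Bakry--\'Emery, and the Gaussian realization $X=\sqrt{n}\,\gamma/\|\gamma\|_2$) are also standard and correct, so there is nothing to add.
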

So in particular $X$ has the convex concentration property with constant $c$ and
it follows by Theorem $\ref{thm:hanson-wright-convex}$
that it also satisfies the tail bound of the Hanson-Wright inequality.
As shown in \eqref{eq:subexp:momentbound},
this implies that the columns of $\mathcal{A}$ are sub-exponential
with $\|\mathcal{A}_i\|_{\psi_1} \leq C$ for some absolute constant $C>0$.
With this we can apply Corollary \ref{cor:constant_norm}.

\begin{remark}
    In this section we did not specifically use the property that the columns
    of $\Am$ are drawn iid from the sphere, but only their convex concentration property.
    So the results also hold for the larger class of normalized columns with dependent
    entries, i.e. those
    which satisfy
    the convex concentration property. E.g. it is known that
    $X = (x_1,...,x_n)$ satisfies the convex concentration property 
    if its entries are drawn iid without replacement from some fixed
    set of numbers $\{b_1,...,b_m\}$ with $b_i \in [0,1]$. For more examples
    see \cite{Ada2015}. Also note that the sub-Gaussian iid case of
    section \ref{sec:sub-gauss} is not covered by Theorem
    \ref{thm:hanson-wright-convex}, since $X = (x_1,...,x_n)$
    with sub-Gaussian iid $x_i$ does not, in general, have the convex concentration
    property with a constant independent of dimension \cite{Ada2015}.
\end{remark}
\section*{Acknowledgments}
We thank Fabian Jänsch, Radoslaw Adamczak, Saeid Haghighatshoar and
Giuseppe Caire for fruitful discussions.  PJ has been supported by DFG grant JU 2795/3.
\printbibliography

\end{document}